\def\ben{\begin{equation}}
\def\een{\end{equation}}
\def\bea{\begin{eqnarray}}
\def\eea{\end{eqnarray}}
\newtheorem{thm}{Theorem}[section]
\newtheorem{lem}[thm]{Lemma}
\begin{document}

\hfuzz=100pt
\title{Non-smooth geometry and collapse of flexible structures under smooth loads}
\author{G. Cs\'anyi, \\
{\em Department of Engineering}\\
{\em  Cambridge University}\\
{\em Trumpington Street, Cambridge CB2 1PZ, U.K.}\\
{\em and}  \\
G. Domokos \\
{\em Department of Mechanics, Materials, and Structures }\\
{\em Budapest University of Technology and Economics}\\
{\em M\H uegyetem rkpt.3 }\\
{\em Budapest 1111, Hungary} }

\date{\today}       
 \maketitle

\begin{abstract}
While static equilibria of flexible strings subject to various
load types (gravity, hydrostatic pressure, Newtonian wind) is well understood textbook material,
the combinations of the very same loads can give rise to complex
spatial behaviour at the core of which is the unilateral material
constraint prohibiting compressive loads. While the effects of
such constraints have been explored in 
optimisation problems involving straight cables,
the geometric complexity of physical configurations has
not yet been addressed. Here we show that flexible strings subject to combined smooth loads may not have smooth
solutions in certain ranges of the load ratios.
This non-smooth phenomenon is closely related to the collapse geometry of inflated
tents. After proving the nonexistence of smooth solutions for
a broad family of loadings we identify two alternative, critical geometries immediately
preceding the collapse. We verify these analytical results by dynamical
simulation of flexible chains as well as with simple table-top experiments with
an inflated membrane.
\end{abstract}

\section{Introduction}

The geometry and equilibrium configuration of flexible strings has been studied ever since the
classical papers  on the catenary by Bernoulli \cite{Bernoulli}, Leibniz \cite{Leibniz} and Huygens \cite{Huygens} in 1691.
In addition, Jacob Bernoulli discovered that the same curve is the solution to the problem of the shape of a sail under Newtonian parallel wind \cite{BernoulliJac}.
Classical textbooks \cite{Beer1} extend the description of stable equilibria of flexible strings to various  load types (e.g. gravity, wind or hydrostatic pressure) as analytical solutions of boundary value problems associated with ordinary differential equations. Such stable configurations can only fail due to intrinsic material nonlinearities \cite{Beer2}. On the other hand, elastic structures have additional failure modes due to global geometrical instabilities, as described first by Euler in 1744 \cite{Euler}.
More recently, the nonlinear behaviour of strings has been discussed by
Antman \cite{Antman} who describes three types of loading: gravity ($g$), suspension bridge load ($p$) and hydrostatic pressure
($h$). In addition we also consider the parallel newtonian wind ($w$) discussed by Jacob Bernoulli and  henceforth we refer to these as \em classical \rm loads.  In case of classical loads
Antman identifies multiple, smooth solutions for nonlinearly elastic strings. One of these solutions is stable
under tension and there exist multiple, smooth compressive solutions which are unstable.
Here we look at the very same four classical loading cases and show that their linear combination,
resulting in \em non-classical \rm loads,  can lead to unexpected and complex
phenomena. While the governing ODEs can be readily derived, their solutions are nontrivial.
In fact, for these non-classical loads we show that for certain finite ranges of
the load ratios $p/g,h/g,w/g$,  {\em no smooth solution exists}.

At the very core of these unexpected phenomena is the \emph{unilateral constraint} that strings can not carry compressive loads. It is well known that external
unilateral constraints can result in highly complex patterns in bifurcation diagrams as well as in spatial configurations \cite{HolmesDomokos, Pellegrino}. One notable example is a compressed, twisted elastic fiber
with self-contact, leading to the well-known mechanical
model for the geometry of supercoiled DNA molecules \cite{Coleman}. The geometry of strings in the presence of external  unilateral constraints
has also been investigated \cite{Khodja}, however, 
much less is known about the global geometrical consequences of {\em internal} (material) constraints.
The latter
has only been explored in the context of optimisation problems for straight cables \cite{Kanno, Patriksson}, the full geometric complexity in the spatial configurations
of flexible strings arising as a consequence of unilateral material behavior has not yet been described. The main reason for this is that under the above-mentioned 
classical loads the entire length of the string is either under tension or it forms a (physically irrelevant), purely compressed arch. The combined, non-classical
loads, scrutinised in this paper, can lead to the tension vanishing either point-wise (resulting in smooth tensile
segments joined by kinks) or the tension vanishing identically (at nonzero load parameter), resulting in spatially complex patterns \cite{moviepg, moviehg}.
As we will show, due to the unilateral constraint in the material behavior 
the phase space of the governing ODE is separated into disjoint, invariant domains.
If the boundary conditions admit solutions within one single domain then this solution is smooth 
(this is the case for classical loads), if however, the solution  involves trajectory-segments
in more than one domain then we see non-smooth,  spatially complex shapes.
We will illustrate using numerical simulation 
that beyond presenting surprising geometric features in static equilibrium configurations, strings under non-classical loads
display rich dynamical behaviour.  

In Section \ref{s:background} we derive the governing equations and establish the general conditions for the
non-existence of smooth solutions. In Section \ref{s:examples} we show how these conditions apply in case of specific non-classical
loads and what the critical ranges of the load parameters are. In Section \ref{s:numerical} we show numerical simulations illustrating non-smooth, complex spatial
behavior in the critical ranges of the load parameter. Section \ref{s:experiments} briefly describes our simple table-top experiment
and compares the results to the simulations. In Section \ref{s:conclusions} we draw conclusions and discuss
possible generalisations and applications.

\section{Mathematical background}\label{s:background}
\subsection{Governing equations}

We assume a flexible (ie. zero bending stiffness), inextensible string in a
vertical plane,
parametrized by the arc length $s$ with angle of slope $\alpha(s)$ measured with respect to the horizontal axis $x$, and tension
$T(s)$. We start with the formula for the \em hoop stress \rm  which, according to Gordon \cite{Gordon} originates
in Mariotte's formula for cylindrical vessels:
\ben \label{Mariotte}
f_n=\frac{T}{R}
\een
describing the relationship between the radial force (normal pressure) $f_n$, the tension $T$ and
the radius of curvature $R$. Equation (\ref{Mariotte})
may be rewritten as
\ben \label{aa1}
\kappa= \dot \alpha = \frac{f_n}{T}
\een
where $\kappa$ denotes the curvature and $\dot{()}=d/ds$.
By supplementing the radial (normal) equilibrium equation (\ref{aa1}) with the (trivial) tangential equilibrium
\ben \label{aa2}
\dot T = f_t
\een
and noting that the flexural rigidity did not enter into these equations, we see that (\ref{aa1}-\ref{aa2})
provide the full description of the equilibrium of flexible strings. This seems to be of some historical interest as Mariotte's formula
appears to be from 1680, whereas the
equations for the string were first published (simultaneously) by Johann Bernoulli \cite{Bernoulli}, Leibniz \cite{Leibniz}
and Huygens \cite{Huygens} over a decade later.

\subsection{Cartesian frame and classical loads}
If we are also interested in the spatial configuration $x(s),y(s)$ in the global $[xy]$ orthogonal frame
then we have to supplement (\ref{aa1}-\ref{aa2}) by

\bea \label{y}
\dot y & = & \sin(\alpha) \\
\label{x}
\dot x & = & \cos(\alpha), 
\eea
(expressing inextensibility) with corresponding boundary conditions

\bea \label{b1}
x(0) & = & 0  \\
\label{b2}
y(0) & = & 0 \\
\label{b3}
x(L) & = & x_0 \\
\label{b4}
y(L) & = & y_0
\eea
which, together with
the governing equations define a well posed boundary value problem (BVP) (\ref{aa1}-\ref{b4}).
We will investigate one-parameter families of these BVPs, where the control parameter
$\lambda$ will correspond to the load, either as load intensity or as the ratio
of two load intensities.
 Without loss of generality we assume $x_0 >0$ and for later reference
and we introduce the ``boundary slope'' $\alpha_0$ as
\ben \label{alfa0}
-\frac{\pi}{2}\leq \alpha_0=\arctan\left(\frac{y_0}{x_0}\right) <\frac{\pi}{2}.
\een

Before continuing, we show that the connection to the Cartesian representation is
straightforward. Indeed, since
\ben
\cos(\alpha)=\frac{1}{\sqrt{1+y'^{2}}}, \quad \dot\alpha=\frac{y''}{(1+y'^{2})^{3/2}}, \quad()'=d/dx,
\een
and the horizontal component $H$ of the tangential force $T$ is obtained as
\ben \label{force}
H  =  T\cos(\alpha),
\een
equations (\ref{aa1})-(\ref{aa2}) may be rewritten as
\bea \label {xy1}
y'' & = & \frac{f_n}{H\cos^2(\alpha)} \\
\label {xy2}
H' & =  & f_t-f_ny'.
\eea
If we now consider a vertical load $g$, the normal pressure  and tangential load $f_n, f_t$ can be expressed as
\ben \label{gravityload}
f_n = g\cos(\alpha)\qquad f_t=g\sin(\alpha).
\een 
By substituting (\ref{gravityload}) into (\ref{xy1})-(\ref{xy2}) we have
\bea \label{g2}
y'' & = & \frac{g}{H\cos(\alpha)} \\
H' & = & 0
\eea
from which the classical ODE for the catenary
\ben \label{gravity}
y''=\frac{g}{H}\sqrt{1+y'^{2}}
\een
immediately follows. 
If instead of the dead weight $g$ we substitute the ``bridge load'' $p=g\cos(\alpha)$ then we get
\ben \label{bridgeload}
f_n = p\cos^2(\alpha) \qquad f_t=p\cos(\alpha)\sin(\alpha)
\een 
and by substituting into (\ref{xy1})-(\ref{xy2}) we arrive at the other well-known ODE
\bea \label{bridge}
y'' & = & \frac{p}{H} \\
H' & = & 0
\eea
yielding the parabola as its solution. The ``Newtonian'' model of vertical wind load with intensity $w/2$ assumes elastic
collisions between the incoming particles and the string, resulting in loading that is normal
to the string's tangent:
\ben \label{windload}
f_n = w\cos ^2(\alpha), \qquad f_t=0,
\een 
so substitution into (\ref{force})-(\ref{xy1}) yields
\bea \label{w2}
y'' & = & \frac{w}{T\cos(\alpha)} \\
T' & = & 0
\eea
which shows complete analogy to (\ref{g2}) and also yield the classical ODE for the catenary
\ben \label{wind}
y''=\frac{w}{T}\sqrt{1+y'^{2}}.
\een
Thus the physical shape in the $[xy]$ plane of the string under gravity and vertical, Newtonian wind is
identical, however, in the first case the horizontal component of the tension is constant
while in the second case the magnitude of the tension itself is constant, so in the
$[\alpha,T]$ phase space of (\ref{aa1}-\ref{aa2}) the solutions appear as different trajectories.
In case of hydrostatic pressure $h$ we have
\ben \label{pressureload}
f_n = h, \qquad f_t=0,
\een 
so substitution into (\ref{xy1})-(\ref{xy2}) yields
\bea \label{pressure}
y'' & = & \frac{h}{H}(1+y'^2) \\
H' & = & -hy',
\eea
for which the solution is a circular arc. 

\subsection{Critical solutions of the IVP and their interpretation in the phase plane}\label{ss:critical}

In general, the  load components $f_n, f_t$ may depend on the spatial location, however, here we only treat loads
which  depend only on the slope $\alpha$, i.e. we have 
\ben \label{loads}
f_n=f_n(\alpha) \qquad f_t=f_t(\alpha).
\een 
We remark that all four  classical loads
(gravity (\ref{gravityload}), bridge load (\ref{bridgeload}), Newtonian wind (\ref{windload}) and hydrostatic pressure (\ref{pressureload})) satisfy (\ref{loads}). This implies that the system (\ref{aa1}-\ref{aa2}) 
becomes autonomous and can be integrated (at least numerically)
as an initial value problem (IVP) to find $\alpha(s)$ and $T(s)$. We can make several simple
observations about the phase portrait of (\ref{aa1}-\ref{aa2}). First we note that
the line $T=0$ separates the $[\alpha,T]$ plane into two invariant half-planes
and trajectories for sufficiently small values of $T$ run almost parallel to $T=0$
except for isolated points where $f_n(\alpha)=0$.
Regarding the latter, we  observe that the IVP associated with (\ref{aa1}-\ref{aa2})
is solved by the simple Ansatz
\ben \label{ansatz}
\alpha(s)=\alpha^\star_i=constant, \qquad i=1,2,\dots n
\een
where $ \alpha^\star_{i} $ are the real roots 
\ben \label{critical}
f_n(\alpha)=0
\een
and we call (\ref{ansatz}) \em critical \rm solutions of (\ref{aa1}-\ref{aa2}). 
These solutions appear as straight, vertical lines in the $[\alpha,T]$ phase space
of (\ref{aa1}-\ref{aa2}) (see Figure~\ref{phasefig}). Since the phase space is a unique representation of
solutions of the IVP (i.e. trajectories cannot intersect), this implies that if
$n>1$ then the phase space is partitioned
into invariant domains (vertical semi-infinite stripes), defined by consecutive roots $\alpha^\star_i,\alpha^\star_{i+1}$ and
the $T=0$ axis. All non-critical, smooth
solutions are  trapped in one of these semi-infinite stripes, i.e. their slopes
will be bounded both from below and from above.
Since any smooth solution of (\ref{aa1}-\ref{aa2}) will be trapped inside one of
the invariant domains,  the curvature $\kappa(s) = \dot \alpha(s)$ along these solutions
is bounded away from zero, i.e. such solutions can not have inflection points
in the physical $[xy]$ space. Non-smooth solutions may exhibit \it kinks \rm
which appear in the physical $[xy]$ space
as two straight lines, corresponding to critical solutions (satisfying (\ref{critical})) joined at a vertex.
Such kinks may only occur at vanishing tension, so they appear at the bottom
of the boundary of the invariant stripes in the phase space: here  kinks correspond to two 
neighboring critical solutions $ \alpha^\star_{i}, \alpha^\star_{i+1}$, joined by
the horizontal $[\alpha^\star_{i}, \alpha^\star_{i+1}]$ segment of the $\alpha$-axis.
We call such kinks \em mathematical \rm as opposed to \em physical kinks \rm where
segments with non-vanishing curvatures are joined at a vertex. Physical kinks can be
only sustained by strings with finite bending stiffness and/or self-contact of the string which
are not included in our mathematical model. Nevertheless,
mathematical kinks are the key elements in the description
of our model and the interpretation of the physical processes.
In numerical and physical experiments
we expect to see smooth solutions to display first mathematical kinks which subsequently evolve
into physical kinks.

Of special interest is the case when
the boundary slope $\alpha_0$ (defined in (\ref{alfa0})) is trapped in one of the invariant stripes, i.e.
\ben \label{bracket}
\exists k: \alpha^\star_k \leq \alpha_0 \leq \alpha^\star_{k+1}
\een
If (\ref{bracket}) is fulfilled  and the width $\delta_k$ of the stripe satisfies
\ben \label{delta}
\delta_k =\alpha^\star_{k+1}-\alpha^\star_{k} < \pi,
\een 
then we call $\alpha^\star_k, \alpha^\star_{k+1}$ the \em relevant roots \rm of (\ref{critical})
and in the next subsection we show that the existence of relevant roots excludes smooth solutions
for a sufficiently long string. 

\begin{figure}[here]
\begin{center}
\includegraphics[width=100 mm]{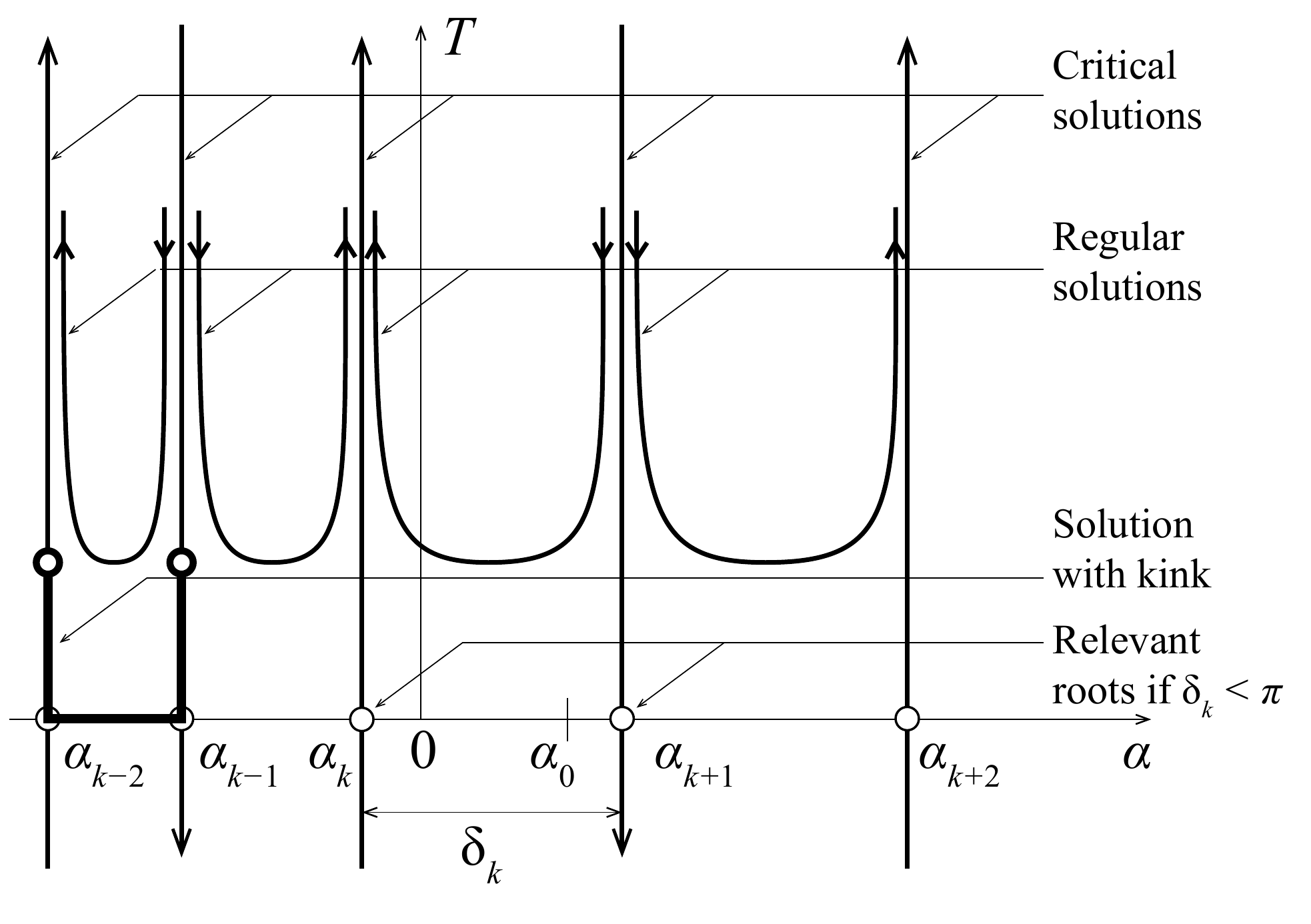}
\end{center}
\caption{Solutions in $[\alpha,T]$ phase space. Straight, vertical lines represent critical solutions corresponding to the roots of (\ref{critical}). Regular solutions are trapped in the 
semi-infinite stripes between critical solutions and the $T=0$ axis. Relavant roots exist if two critical solutions $\alpha^\star_k, \alpha^\star_{k+1}$ are bracketing the boundary slope $\alpha_0$
and  $\delta_k =\alpha^\star_{k+1}-\alpha^\star_{k} < \pi$. Kinks are produced by two neighboring critical solutions $\alpha^\star_i, \alpha^\star_{i+1}$ joined at a vertex,
the latter corresponds to the interval $[\alpha^\star_i, \alpha^\star_{i+1}]$ on the $\alpha$ axis.}\label{phasefig}
\end{figure}

\subsection{Nonexistence of smooth BVP solutions}

Based on the observations in the previous subsection regarding solutions of the IVP
(\ref{aa1}-\ref{aa2}) we can draw some immediate conclusions concerning the solutions of the BVP (\ref{aa1}-\ref{b4}).
Since smooth IVP solutions can not have inflection points in the physical $[xy]$ space, neither can smooth BVP
solutions exhibit such points. Thus, in a one-parameter ($\lambda$) BVP where there exist two parameter values $\lambda_1,\lambda_2$ for both of which the 
BVP has unique, smooth,  stable solutions along which the curvature $\kappa = \dot \alpha$
has uniform sign and this sign is opposite for $\lambda_1$ and $\lambda_2$, we can claim that those two configurations
can not be connected via a continuous family of smooth BVP solutions. Such is the case if we choose $\lambda$ as any of the load ratios $p/g, w/g,h/g$ and assume  downward direction for $g$ and an 
upward direction for $p,w,h$ defined in (\ref{bridgeload}), (\ref{windload}) and (\ref{pressureload}), respectively.
This observation implies that in a one-parameter BVP associated with non-classical loads, if we vary the control parameter continuously between its extreme values
(corresponding to classical loads with opposite directions), we will observe some non-smooth phenomenon which may
 manifest itself  either as a dynamic jump between equilibria or by the existence of non-smooth equilibria.
We will not only show that both possibilities exist, we will also derive the exact range of the control
parameter $\lambda$ where non-smooth phenomena occur. In addition we will also describe the limiting
geometries of BVP solutions at the border of this critical parameter range.

We start with a simple geometric observation about self-intersecting solutions.
Let $\mathbf{r}(s)=[x(s),y(s)]^T$ be a smooth curve, where  $s\in[0,L]$ ($L>1$) denotes the arclength
and we have the boundaries $\mathbf{r}(0)=[0,0]^T$, $\mathbf{r}(L)=[1,0]^T$. Then we have

\begin{lem}\label{lem:selfx}
\rm If $\nexists s^\star\in[0,L]$ such that $\dot\mathbf{r}(s^\star)=[1,0]^T$ then $\mathbf{r}(s)$ is self-intersecting.
\end{lem}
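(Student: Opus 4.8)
The plan is to prove the contrapositive: assuming $\mathbf{r}$ is \emph{simple} (not self-intersecting), I will exhibit some $s^\star$ with $\dot{\mathbf{r}}(s^\star)=[1,0]^T$.

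First I would normalise the geometry. Write $\dot{\mathbf{r}}(s)=[\cos\phi(s),\sin\phi(s)]^T$ with $\phi$ a continuous tangent angle; the hypothesis says $\phi(s)$ is never an integer multiple of $2\pi$, so, $[0,L]$ being connected, $\phi$ stays inside one interval $(2\pi k,2\pi(k+1))$, and after subtracting $2\pi k$ we may assume $\phi(s)\in(0,2\pi)$ for all $s$. Integrating gives $\int_0^L\cos\phi\,ds=x(L)-x(0)=1$ and $\int_0^L\sin\phi\,ds=y(L)-y(0)=0$. The endpoint conditions force $\dot y=\sin\phi$ to change sign: it cannot stay $\ge 0$, since then $y$ would be nondecreasing with equal endpoint values, hence $y\equiv 0$, hence $\phi\equiv\pi$, hence $x(L)=-L\ne 1$; and symmetrically it cannot stay $\le 0$. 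So $\dot y(s_0)=0$ for some $s_0$, and there $\dot{\mathbf{r}}(s_0)=[\pm 1,0]^T$; as $[1,0]^T$ is excluded, $\dot{\mathbf{r}}(s_0)=[-1,0]^T$. Thus \emph{at every point of $\mathbf{r}$ with horizontal tangent the tangent points due West.}

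The heart of the argument is geometric. We may assume $Y:=\max_s y(s)>0$ (if $y\equiv 0$ then $\phi$ is constant, impossible; if $y\le 0$ everywhere, reflect in the $x$-axis, which preserves every hypothesis and simplicity). Pick $\hat s$ realising $Y$; since $y(0)=y(L)=0$ this $\hat s$ is interior, so $\dot y(\hat s)=0$ and, by the previous step, $\dot{\mathbf{r}}(\hat s)=[-1,0]^T$. Hence $x$ is strictly decreasing at $\hat s$, so just before $\hat s$ the curve lies strictly \emph{east} of $P:=\mathbf{r}(\hat s)$ and just after $\hat s$ strictly \emph{west} of it. In the clean case where $\dot y$ vanishes only at $\hat s$, $y$ increases on $[0,\hat s]$ and decreases on $[\hat s,L]$, so each half is a graph $x=X_1(y)$, resp.\ $x=X_2(y)$, over $y\in[0,Y]$ with $X_1(0)=0$, $X_2(0)=1$ and $X_1(Y)=X_2(Y)=x_P$. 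The West tangent at $P$ makes $X_1'(y)\to-\infty$ and $X_2'(y)\to+\infty$ as $y\to Y^-$, so $X_1>x_P>X_2$ just below $Y$, i.e.\ $X_1-X_2>0$ near $y=Y$, while $X_1-X_2=-1<0$ at $y=0$; the intermediate value theorem then produces $y_0\in(0,Y)$ with $X_1(y_0)=X_2(y_0)$, a point where $\mathbf{r}$ crosses itself — contradicting simplicity.

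For a general $\mathbf{r}$ the set $N=\{s:\dot y(s)=0\}$ can be complicated, but it is closed and on each component of $[0,L]\setminus N$ the function $y$ is strictly monotone, so the curve decomposes into successive monotone ``bumps'' joined at points of $N$ where, as shown, the tangent points West. The same local computation gives, across each such junction, that the incoming bump lies east of the outgoing one at $y$-levels adjacent to the junction level, whereas the boundary data $x(0)=0<1=x(L)$ supplies the reverse inequality at level $y=0$; chaining these east/west inequalities along the chain of bumps and invoking the intermediate value theorem between a suitable pair of them forces $\mathbf{r}$ to meet itself. I expect the main obstacle to be exactly this last step — bookkeeping the overlapping $y$-ranges of the bumps and the orientation (maximum or minimum) of each junction so the inequalities combine, together with the fussy degenerate cases ($N$ containing intervals, $\dot y$ vanishing to higher order, or the global maximiser not being unique) — rather than any single hard estimate.
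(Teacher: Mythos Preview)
Your clean case (a single interior maximum of $y$) is correct and is a genuinely different, more analytic route than the paper's: you parametrise the two monotone arcs as graphs $x=X_1(y),\,x=X_2(y)$ over $[0,Y]$, read off from the West-pointing tangent at the top that $X_1-X_2>0$ near $y=Y$ while $X_1-X_2=-1$ at $y=0$, and conclude by the intermediate value theorem. This is clean and self-contained.

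The gap is exactly where you place it, but it is a real gap rather than mere bookkeeping. Your ``chaining'' only tells you that, \emph{near the $y$-level of each junction}, the bump arriving lies to the east of the bump departing. Those local order relations live at different heights and between different pairs of arcs; there is no evident transitivity that carries an inequality at level $y=0$ up through a sequence of junctions to produce two arcs with overlapping $y$-range and reversed horizontal order. With several interleaved maxima and minima the bumps need not share any common $y$-interval with the first and last ones, so the IVT has nothing to bite on without a further idea.

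The paper supplies that idea and avoids the induction entirely: instead of using only the global maximum, take the global minimum $s_1$ and global maximum $s_2$ of $y$ simultaneously. The sub-arc $\mathbf{r}|_{[s_1,s_2]}$ runs from the bottom to the top of the full stripe $y\in[y(s_1),y(s_2)]$ and hence (if simple) separates that stripe into two components. Because the tangent is $[-1,0]^T$ at \emph{both} ends, $\mathbf{r}(s_1-\epsilon)$ lies on one side of this separator and $\mathbf{r}(s_2+\epsilon)$ on the other; whatever sides the endpoints $\mathbf{r}(0)$ and $\mathbf{r}(L)$ occupy, at least one of the remaining arcs $[0,s_1]$ or $[s_2,L]$ must cross the separator, which is the self-intersection. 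Using both extrema at once replaces your chain of bumps by a single Jordan-arc separation and dispenses with all the degenerate cases you list.
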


\begin{proof}
We start by noting that at any local extremum of $y(s)$ we have
$\dot\mathbf{r}(s)=[\pm 1,0]^T$, and then sketch a proof of the Lemma by contradiction. 
The denial of Lemma \ref{lem:selfx} is equivalent to the statement that all extrema of $y(s)$ are associated with $\dot x=-1$, however, $\mathbf{r}(s)$ has no self-intersection.
Let   $y(s)$ assume its global minimum and maximum in $s\in[0,L]$ at $s_1$ and $s_2$ respectively and assume without loss of generality that $s_1<s_2$.
Then the segment  $s\in [s_1,s_2]$ of $\mathbf{r}(s)$ will separate the stripe $y \in [y(s_1),y(s_2)]$ into two disjoint domains (cf. Figure \ref{figlemma}).
Depending on which of the boundaries $\mathbf{r}(0), \mathbf{r}(L)$ is contained in which of the disjoint domains,  either  $\mathbf{r}(0)$ and $\mathbf{r}(s_1-\epsilon)$ or
$\mathbf{r}(s_2+\epsilon)$ and $\mathbf{r}(L)$, or both pairs are separated by the segment  $s\in [s_1,s_2]$ of $\mathbf{r}(s)$,
i.e. the curve can only be connected smoothly by self-intersection. 
\end{proof}

\begin{figure}[here]
\begin{center}
\includegraphics[width=110 mm]{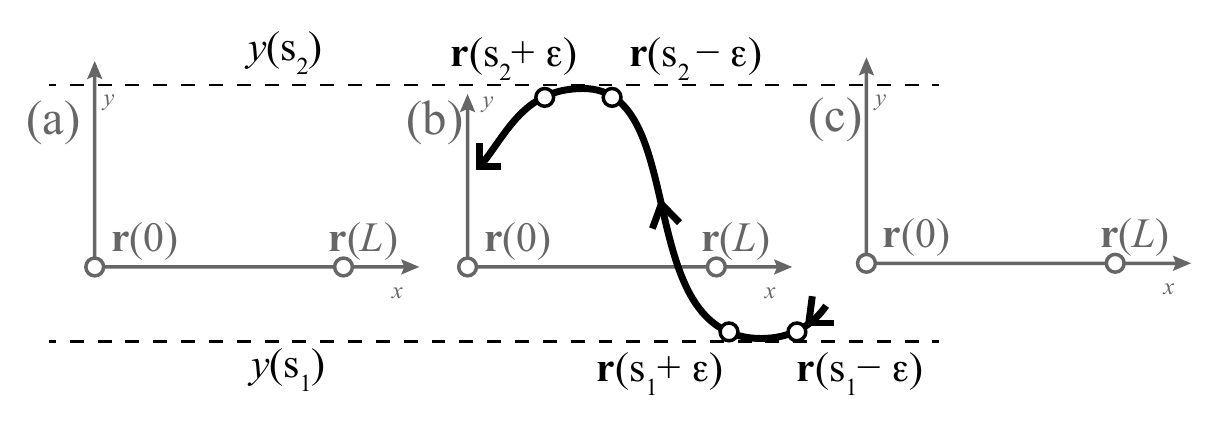}
\end{center}
\caption{Illustration for the proof of Lemma~\ref{lem:selfx}. Segment $s\in [s_1,s_2]$ of $\mathbf{r}(s)$ between the global minimum and maximum separates the stripe $y \in [y(s_1),y(s_2)]$ into two disjoint domains.
Observe that either  $\mathbf{r}(0)$ and $\mathbf{r}(s_1-\epsilon)$ (case a) or
$\mathbf{r}(s_2+\epsilon)$ and $\mathbf{r}(L)$ (case c), or both pairs (case b) are separated the segment  $s\in [s_1,s_2]$ of $\mathbf{r}(s)$.
}\label{figlemma}
\end{figure}

We proceed to formulate our main claim, and as a preparation we rotate the global $[xy]$ frame by
\ben \label{symmetry}
\varphi=0.5(\alpha^\star_k+\alpha^\star_{k+1})
\een
and we refer to the coordinates in the rotated system as $\bar x,\bar y,\bar \alpha$ etc., so we have
\ben
\bar\alpha^\star_k = -\bar \alpha^\star_{k+1}.
\een
Now we can formulate 

\begin{thm}\label{thm:smoothselfx}
If (\ref{critical}) has relevant roots and $L > \bar x_0/\cos(\bar \alpha^\star_k)$
then the BVP (\ref{aa1}-\ref{b4}) has no smooth solution free of self-intersection.
\end{thm}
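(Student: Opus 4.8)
The plan is to combine the phase-plane dichotomy of Section~\ref{ss:critical} with Lemma~\ref{lem:selfx}: a smooth BVP solution must either stay inside the relevant stripe, in which case it turns out to be too long to reach the prescribed far endpoint, or it must live in a different invariant cell of the phase plane, in which case its tangent never points along the chord joining the endpoints and Lemma~\ref{lem:selfx} forces a self-intersection. I would work throughout in the rotated frame of (\ref{symmetry}), where the relevant stripe becomes the symmetric interval $[\bar\alpha^\star_k,\bar\alpha^\star_{k+1}]=[-\delta_k/2,\delta_k/2]$; since the roots are relevant, $\delta_k<\pi$, hence $\cos\bar\alpha^\star_k=\cos(\delta_k/2)>0$. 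As a preliminary I would unpack the length hypothesis: writing $\ell=\sqrt{x_0^2+y_0^2}$ for the chord length, the far endpoint has rotated coordinates $\bar x_0=\ell\cos\bar\alpha_0$, $\bar y_0=\ell\sin\bar\alpha_0$ with $\bar\alpha_0=\alpha_0-\varphi$, and the bracketing condition (\ref{bracket}) puts $\bar\alpha_0\in[-\delta_k/2,\delta_k/2]$, so monotonicity of $\cos$ gives $\cos\bar\alpha_0\ge\cos\bar\alpha^\star_k>0$ and the hypothesis reads $L>\bar x_0/\cos\bar\alpha^\star_k=\ell\,\cos\bar\alpha_0/\cos\bar\alpha^\star_k\ge\ell$. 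In particular $L>\ell$, which is exactly the length inequality needed to apply Lemma~\ref{lem:selfx} after rescaling all lengths by $1/\ell$.

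Next I would dispose of the critical solutions. If $\alpha\equiv\alpha^\star_i$ then the curve is a straight segment of length $L$ issuing from the origin in the fixed direction $\alpha^\star_i$, so its far end is $L(\cos\alpha^\star_i,\sin\alpha^\star_i)$, which can equal $(x_0,y_0)$ only if $\alpha^\star_i=\alpha_0$ and $L=\ell$ --- impossible since $L>\ell$. Hence no critical solution solves the BVP, and by the analysis of Section~\ref{ss:critical} every remaining smooth BVP solution is trapped, for all $s$, inside a single open stripe $(\alpha^\star_j,\alpha^\star_{j+1})$.

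I would then split into two cases. Case $j=k$: in the rotated frame $\bar\alpha(s)\in(-\delta_k/2,\delta_k/2)$ for every $s$, so $\dot{\bar x}(s)=\cos\bar\alpha(s)\ge\cos\bar\alpha^\star_k$, and integrating, $\bar x_0=\bar x(L)=\int_0^L\cos\bar\alpha(s)\,ds\ge L\cos\bar\alpha^\star_k>\bar x_0$, a contradiction; so the relevant stripe carries no smooth BVP solution at all. Case $j\neq k$: the closed stripes $[\alpha^\star_i,\alpha^\star_{i+1}]$ have pairwise disjoint interiors and $\alpha_0\in[\alpha^\star_k,\alpha^\star_{k+1}]$, so $\alpha_0\notin(\alpha^\star_j,\alpha^\star_{j+1})$ and therefore $\alpha(s)\neq\alpha_0$ for all $s$. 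Rotating once more so that the chord runs along the positive $x$-axis (endpoints $[0,0]$ and $[\ell,0]$, tangent $\dot{\mathbf r}(s)=[\cos(\alpha(s)-\alpha_0),\sin(\alpha(s)-\alpha_0)]$), the tangent never equals $[1,0]^T$; rescaling by $1/\ell$ and invoking Lemma~\ref{lem:selfx}, which applies since $L/\ell>1$, we conclude that the solution is self-intersecting. Together the two cases show that a smooth BVP solution is either impossible or self-intersecting, which is the claim.

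I expect the delicate point to be the treatment of vanishing tension, where the phase-plane dichotomy I have leaned on has to be made precise: a ``smooth'' solution could in principle graze a critical line $\alpha=\alpha^\star_i$ exactly at a point where $T=0$ (the would-be vertex of a mathematical kink), or the chord slope $\alpha_0$ could itself coincide with one of the critical values, and such borderline configurations must be either subsumed into the critical-solution/kink discussion of Section~\ref{ss:critical} or excluded from the definition of a smooth solution before the above case split is exhaustive. The remaining ingredients --- the trigonometric comparison $\bar x_0/\cos\bar\alpha^\star_k\ge\ell$ and the two short arguments in the cases --- are routine.
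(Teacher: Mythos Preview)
Your argument is correct and is essentially the paper's own proof, reorganised as a two-case split rather than a single contradiction: the paper assumes a smooth, non-self-intersecting solution, invokes Lemma~\ref{lem:selfx} to force $\bar\alpha(s^\star)=\bar\alpha_0$ (hence confinement to the relevant stripe), and then derives the same length inequality $L<\bar x_0/\cos\bar\alpha^\star_k$ that you obtain in Case $j=k$. Your Case $j\neq k$ is just the contrapositive of the paper's use of Lemma~\ref{lem:selfx}, and your explicit disposal of critical (straight-line) solutions and the closing remarks on borderline configurations where $T=0$ or $\alpha_0=\alpha^\star_k$ are points the paper leaves implicit.
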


\begin{proof}
We again prove the statement by contradiction, i.e. we start by assuming that
a smooth, non self-intersecting solution $\bar \alpha (s)$ exists. The lack of self-intersection implies, via Lemma~\ref{lem:selfx},
that 
\ben
\exists s^\star \mbox{ such that  }\alpha(s^\star)=\bar \alpha_0,
\een
i.e. any non self-intersecting smooth solution $\bar \alpha(s)$ is confined to
\ben \label{stripe}
\bar \alpha(s) \in (\bar\alpha^\star_k , \bar \alpha^\star_{k+1}).
\een
from which (due to the symmetrisation (\ref{symmetry}) and due to
the restriction $\delta_k<\pi$) it also follows that
\ben
\cos(\bar\alpha(s)) > \cos(\bar\alpha^\star_k) =\cos(\bar \alpha^\star_{k+1})>0
\een
We can now express the arclength:
\ben
L=\int_{s=0}^{L}ds=\int_{x=0}^{\bar x_0} \frac{d\bar x}{\cos(\bar \alpha)}< \frac{\bar x_0}{\cos(\bar \alpha^\star_{k})}.
\een
Since this contradicts the condition of the Theorem, our proof by contradiction is completed: we showed that
if (\ref{critical}) has relevant roots then the BVP has no smooth solutions
without self-intersections if the length $L$
is above the critical value indicated in the Theorem.
\end{proof}

\section{Analytical examples}\label{s:examples}

\subsection{Classical loads: no relevant roots}
We describe four well-known load types:
\begin{enumerate}
\item Gravity, with uniform weight $g$ per unit arc length
\item ``Bridge load'' with uniform vertical load $p$ per unit horizontal projection
\item Newtonian parallel, vertical wind $w$ per unit of horizontal projection
\item Hydrostatic pressure with uniform load $h$ per unit arc length, normal to the string's tangent
\end{enumerate}
and in each case we set $y_0=0$. As we already mentioned in the Introduction, three load types ($g,p,h$)
have been described in detail by Antman \cite{Antman}.
The equations for all four mentioned load-types are very well known and can be readily derived
from (\ref{aa1}-\ref{aa2}). By substituting any of (\ref{gravityload}),(\ref{bridgeload}),(\ref{windload}),(\ref{pressureload}) into (\ref{critical}) we find that 
in the interval $[-\pi,\pi]$ for gravity, bridge
and wind loads we get two roots, at distance $\delta_k=\pi$ and in case of the hydrostatic
pressure there are no roots (see Table \ref{t:1}, lines 1-4.) According to the
condition $\delta_k>\pi$ in Theorem~\ref{thm:smoothselfx}, the two roots at distance
$\pi$ in case of the vertical loads do not impose a constraint on the length $L$, so we
can expect smooth solutions. However, these roots can be physically interpreted by
saying that under these loads the string is either everywhere vertical or it has
no vertical tangent. This statement agrees with Proposition 2.25 (p592) of Antman \cite{Antman}.

\subsection{Coupled loads: existence of relevant roots}
We now investigate \em coupled loads \rm, i.e. when either $p,w$ or $h$ is acting opposite
to the gravity $g$. For the combined loads we introduce
\bea
\pm \theta_p & = & \arccos(g/p),\hspace{1 cm} |\theta_p|<\pi   \\
\pm \theta_w & = & \arccos(g/w),\hspace{1 cm} |\theta_w|<\pi   \\
\pm \theta_h& = & \arccos(h/g),\hspace{1 cm} |\theta_h|<\pi 
\eea
and as shown in lines 5-7 of Table \ref{t:1}, (\ref{critical}) may have multiple roots in $[-\pi,\pi]$, with distance $\delta_k<\pi$. 
\begin{table}[h]
	\centering
		\begin{tabular}{|l|l|l|l|l|}
		\hline
			&Load type & $f_n=0$ & i & $\alpha^\star_{i}$     \\
			\hline
			\hline
			1& $g$ & $g\cos(\alpha)=0$ & 1 & $-\pi/2$   \\
			\hline
			 & $ $ &                  & 2 & $+\pi/2$   \\
			\hline
			2& $p$ & $p\cos^2(\alpha)=0$ & 1 & $-\pi/2$   \\
					\hline
			 & $ $ &                  & 2 & $+\pi/2$   \\
			 \hline
			 3& $w$ & $w\cos^2(\alpha)=0$ & 1 & $-\pi/2$   \\
					\hline
			 & $ $ &                  & 2 & $+\pi/2$   \\
			 \hline
			4&$h$ & $h=0$ & - & -   \\
			\hline
			\hline
			5&$g-p$ & $\cos(\alpha)(g-p\cos(\alpha))=0$  & 1 & $-(\pi-\theta_p)$ \\
			\hline
			 &      &                                       & 2 & $-\pi/2$ \\
			 \hline
			 &      &                                        & {\bf 3} & $-\theta_p$ \\
			 \hline
			 &      &                                       & {\bf 4} & $+\theta_p$ \\
			 \hline
			 &      &                                       & 5 & $+\pi/2$ \\
			 \hline
			 &      &                                       & 6 & $+(\pi-\theta_p)$ \\
	\hline
			\hline
			6&$g-w$ & $\cos(\alpha)(g-w\cos(\alpha))=0$  & 1 & $-(\pi-\theta_w)$ \\
			\hline
			 &      &                                       & 2 & $-\pi/2$ \\
			 \hline
			 &      &                                        & {\bf 3} & $-\theta_w$ \\
			 \hline
			 &      &                                       & {\bf 4} & $+\theta_w$ \\
			 \hline
			 &      &                                       & 5 & $+\pi/2$ \\
			 \hline
			 &      &                                       & 6 & $+(\pi-\theta_w)$ \\
	\hline
			\hline
			7&$g-h$ & $g\cos(\alpha)-h=0$ & 1 & $-\theta_h$\\
			\hline		
			 &      &                     & 2 & $+\theta_h$ \\
			\hline

				\end{tabular}
	\caption{Critical solutions for the loads $g,p,w,h$ and the combined loads $(g-p)$,$(g-w)$,$(g-h)$. Observe that the ordering of the 6 roots in case of the gravity-bridge and the gravity-wind  load combinations (lines 5-6) depends on the absolute value of $\theta_p$ and $\theta_w$, respectively. The table shows the ordering for $\theta_p<\pi/2$, i.e. $g/p>0, g/w>0$. Here the
	distance $\delta_k=2\theta_p<\pi$, so the conditions of Theorem~\ref{thm:smoothselfx} are fulfilled.}
	\label{t:1}
\end{table}

Using the results in Table \ref{t:1} we can formulate three corollaries for these special load types.
We observe that the relevant roots are symmetric with respect to zero, so in the symmetrisation (\ref{symmetry}) we have $\varphi=0$. 
Based on Theorem~\ref{thm:smoothselfx} and Table \ref{t:1} we now have
\bigskip

\noindent \bf Corollary 1 \rm If the load is a linear combination of gravity $g$ and ``bridge load'' $p$ 
(cf. Table \ref{t:1}, line 5) then
the BVP (\ref{aa1}-\ref{b4}) has no smooth solutions if 
\begin{displaymath}
\frac{L}{x_0} > \frac{p}{g} > 1
\end{displaymath}

\noindent \bf Corollary 2 \rm If the load is a linear combination of gravity $g$ and ``wind load'' $w$ 
(cf. Table \ref{t:1}, line 6) then
the BVP (\ref{aa1}-\ref{b4}) has no smooth solutions if 
\begin{displaymath}
\frac{L}{x_0} > \frac{w}{g} > 1
\end{displaymath}

\noindent \bf Corollary 3 \rm If the load is a linear combination of gravity $g$ and hydrostatic pressure $h$ (cf. Table \ref{t:1}, line 7) then
the BVP (\ref{aa1}-\ref{b4}) has no smooth solutions if 
\begin{displaymath}
\frac{x_0}{L} < \frac{h}{g} < 1
\end{displaymath}
These corollaries can be obtained from Theorem~\ref{thm:smoothselfx} by substituting the values from Table \ref{t:1} for the specific
load combinations. We can observe that in all three cases as the load ratios $g/p,g/w,h/g$ pass through zero, the corresponding roots
$\pm \theta_p,\pm \theta_w, \pm \theta_h$ cease to be relevant because $\delta_k>\pi$.

One can describe a bifurcation problem in all three cases, and for the bridge and wind loads these coincide, so henceforth we do not treat 
the wind load case separately. The natural bifurcation parameters for the bridge load and hydrostatic cases are
$p/g$ and $h/g$, respectively. In Figure \ref{mainfig} we show the bifurcation diagrams, plotting the parameter values versus the roots of (\ref{critical}). We also include the corresponding physical configurations
in the $[xy]$ plane as well as the corresponding phase portraits of (\ref{aa1}-\ref{aa2}) in the $[\alpha,T]$
 space.

\subsection{Limiting geometries}

The corollaries of the previous subsection indicate the intervals of the load-ratios where
smooth solutions do not exist. The first natural question is to establish the limiting
geometries corresponding to equilibrium configurations at the endpoints of these parameter intervals.

\subsubsection{Bridge and wind coupled with gravity}

In case of the gravity-bridge load combination we can observe that at the lower end of the
critical interval of the load ratio, as $p \to g$ from above, the two relevant roots $\pm \theta_p$
of (\ref{critical}) approach simultaneously each other and zero. Also,
we note that critical solutions at $\pm \pi/2$ exist, corresponding to vertical
string segments. At the point of vanishing tension
a mathematical kink (cf. subsection \ref{ss:critical}) will appear between the horizontal and vertical critical solution
segments. Consequently, we expect the limiting geometry
of the smooth string to be flat with almost vertical ends. At the other end
of the critical interval the two relevant roots $\alpha^\star_{3,4}=\pm \theta_p$
in row 5 of table \ref{t:1} of (\ref{critical}) approach $\arccos(x_0/L)$,
so we expect the limiting geometry to be a symmetrical, upward pointing wedge; here again we have a
mathematical kink joining two string segments with zero tension. 

While both the bifurcation diagram and the limiting geometries are identical
for the bridge and the wind-type loads, the non-smooth geometry inside the critical range is different.
In case of the bridge load coupled with gravity, the resultant of both loads
is vertical so, if the string is a straight line defined by one of the relevant roots $\alpha^\star_{3,4}=\pm \theta_p$
then identically vanishing tension $T(s) \equiv 0$ is possible, so \emph{any number} of mathematical kinks is admissible.
This is highly degenerated behaviour and it differs qualitatively from the other investigated load types.

In case of wind load coupled with gravity, the situation is different: here the resultant of the wind is
normal to the string while gravity is vertical, so tension will change in straight solutions monotonically.
This implies that only one end of a straight segment can have zero tension, so at most one kink is
admissible if the string consists of straight segments.

\subsubsection{Hydrostatic pressure combined with gravity}\label{hydro2}

If we combine hydrostatic pressure with gravity, the result at the lower
end of the critical parameter interval ($h/g=x_0/L$) is analogous to the upper
end of the previous case: the two relevant roots $\alpha^\star_{1,2}=\pm \theta_h$ (in row 7 of table \ref{t:1}) approach 
$\arccos(x_0/L)$ so here we can see a \em downward \rm pointing wedge
with a mathematical kink. If the load ratio exceeds the critical value, the
mathematical kink can not be sustained and we will immediately observe a physical kink
(stabilised by self-contact and bending),
joining two curved string segments.
The upper end of the critical interval ($h/g=1$) is less
trivial: here (\ref{critical}) has only one root, corresponding to the horizontal line.
Unlike the previous
case, here the vertical lines are not solutions so the boundary conditions can not
be satisfied by using straight lines corresponding to the relevant roots.
From this we can conclude that among smooth solutions a self-intersection must occur for $h/g>1$.
Since self-intersection is physically inadmissible, the string will develop self-contact.
Vertical segments of self-contact are in equilibrium and can be joined by non-straight
segments. Inside the critical range predicted by Corollary 3 we expect to see
smooth segments (convex from above) separated by vertical regions in self-contact.

The critical geometries and their order is illustrated in Figure \ref{mainfig}.

\begin{figure}[here]
\begin{center}
\includegraphics[width=130 mm]{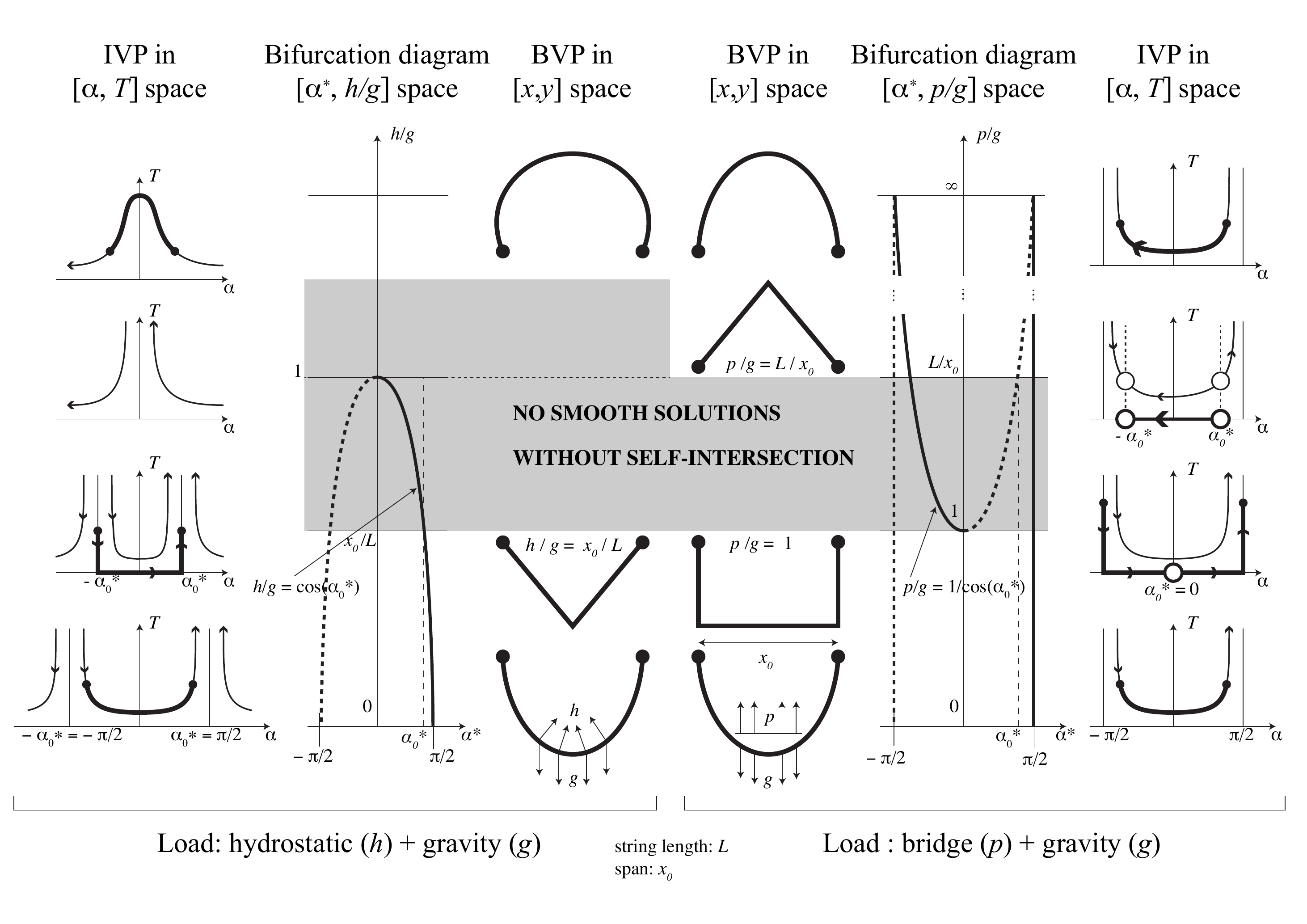}
\end{center}
\caption{Limiting geometries of the Boundary Value Problems for the combined gravity-bridge and gravity-hydrostatic pressure loads are shown in the two central (3rd and 4th) columns. The corresponding trajectories for the Initial Value Problems are shown in the first and last column. The bifurcation diagrams of the critical solutions are in the 2nd and 5th columns, with the stable and unstable solutions represented by solid and dashed lines, respectively. }\label{mainfig}
\end{figure}

\section{Numerical simulations}\label{s:numerical}

In order to study the behavior of the system in the regime that is in between the limiting solutions (gray zone in Figure~\ref{mainfig}), we carried out dynamical simulations. We modelled the flexible string using 100 beads, each connected to its two neighbours by very stiff springs with unit natural length and a steep repulsive interaction between the beads at close range to prevent the string crossing itself. The total energy of the system (without the external loads) was thus
\begin{equation}
E = \sum_i k(r_{i} - 1)^2 + \left(\frac{0.8}{r_{i}}\right)^5,
\end{equation}
where $r_{i} = |\mathbf{x}_i - \mathbf{x}_{i+1}|$ is the distance between the $i$th and the next particle along the string.  External forces due to the loads acted on the particles in addition, and an artificial velocity-dependent damping was introduced in order to drive the system to equilibrium if one existed. The first and last beads were kept fixed. The equations of motion were integrated with the velocity-Verlet scheme\cite{Frenkel} using a time step that was infrequently adjusted to keep the maximum displacement at each step under control.  
During the simulations, the gravitational force was kept constant, and the angle-dependent load (bridge load or hydrostatic pressure) was increased by constant increments from zero up to a large value, where the string resembled its limiting shape, and slowly decreased back down to zero again. In order to establish the existence of equilibrium shapes, the angle-dependent load was kept constant at each value until such time that the largest velocity component was below a threshold, or (allowing for the possibility that no equilibrium shape is stable) a certain time has elapsed since the last change in the load magnitude. 

Our results are illustrated in Figure \ref{numfig}, the arrangement is analogous to that of Figure \ref{mainfig}. For the hydrostatic pressure/gravity combination, self-contact was observed in the critical regime of the load ratio, while in the bridge load/gravity combination, no equilibrium solutions were observed here: the string displayed complex time-dependent dynamics. The dynamical behaviour of the string 
for the full range of the applied loads can be also viewed \cite{moviepg, moviehg}.

\begin{figure}[here]
\begin{center}
\includegraphics[width=130 mm]{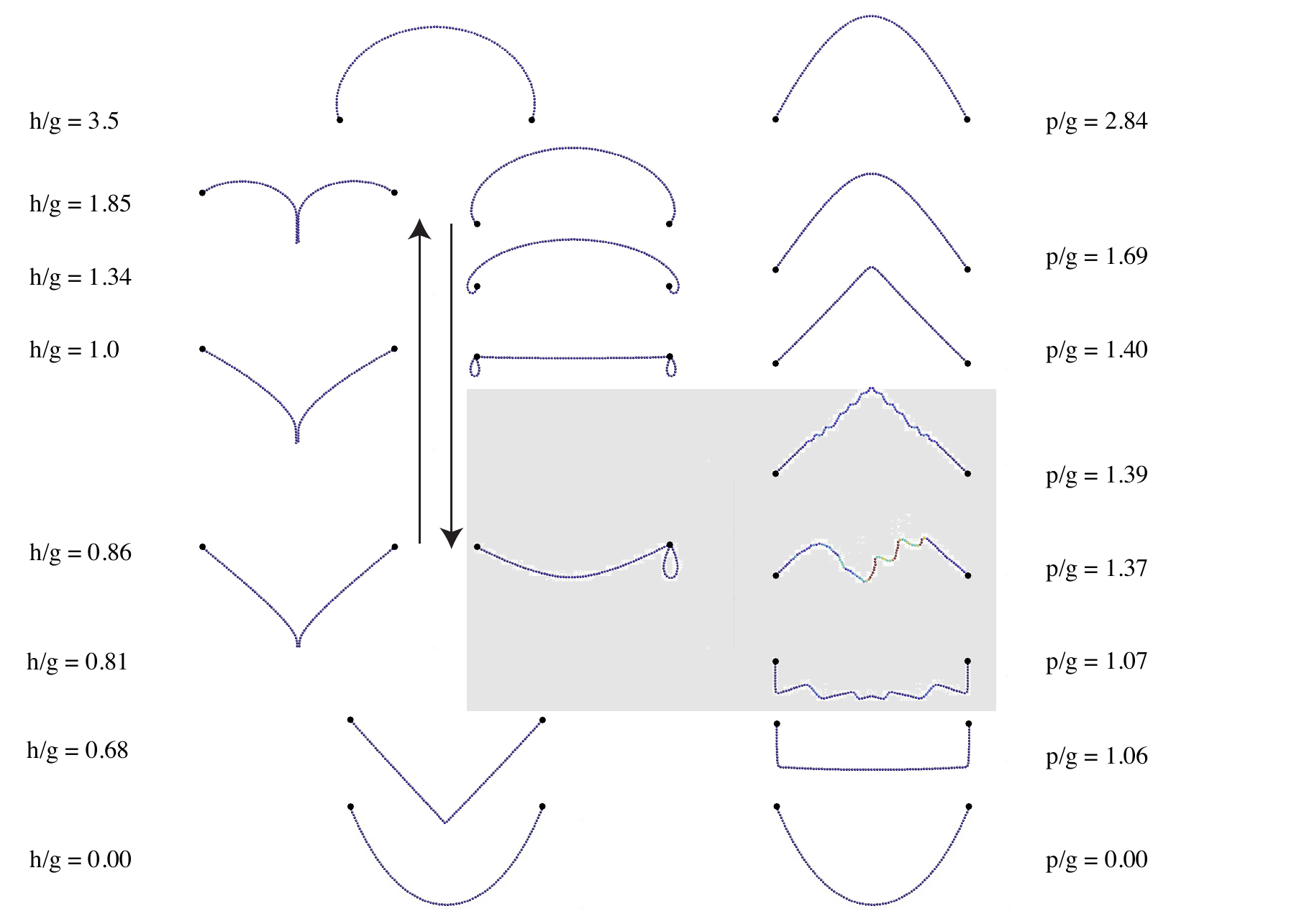}
\end{center}
\caption{Results of numerical simulations of the flexible string under combined load. The left two columns correspond to the hydrostatic pressure/gravity combination, with $h/g$ increasing (decreasing) in the left(middle) column. The right hand column corresponds to the bridge load/gravity combination. In the critical regime of load ratios, non-smooth and self-intersection equilibrium solutions are observed
in the hydrostatic pressure/gravity case, while for the bridge load/gravity combination we did not find any equilibrium solutions.}\label{numfig}
\end{figure}

\section{Experiments: gravity and hydrostatic pressure}\label{s:experiments}

We also carried out a simple table-top experiment for the combined gravity-hydrostatic pressure load case.
Our goal was to identify the limiting geometries described in subsection \ref{hydro2} and to compare the experiment
to the numerical simulations presented in section \ref{s:numerical}. Our equipment was an elongated rectangular
container with dimensions 110x25x35 cm on top of which we applied a rectangular  membrane of stress-free dimensions 110x40cm with an orthogonal mesh drawn for better shape identification.
We assumed that in the middle segment of the elongated rectangle the cross section of the membrane will approximate the shape of the planar string.
The internal pressure was regulated by hand with a pump and a valve and of course this system is conservative, as opposed to the nopn-conservative mathematical model. Nevertheless, we found
good qualitative agreemnt as it is illustrated in
Figure \ref{exptfig} showing a series of photographs taken
during a full load cycle. The wedge shape at the lower critical point as well as the relatively flat geometry prior to collapse are visible. 
The two segments in self-contact, joined by a physical kink, also resemble the numerical simulations.

\begin{figure}[here]
\begin{center}
\includegraphics[width=13 mm]{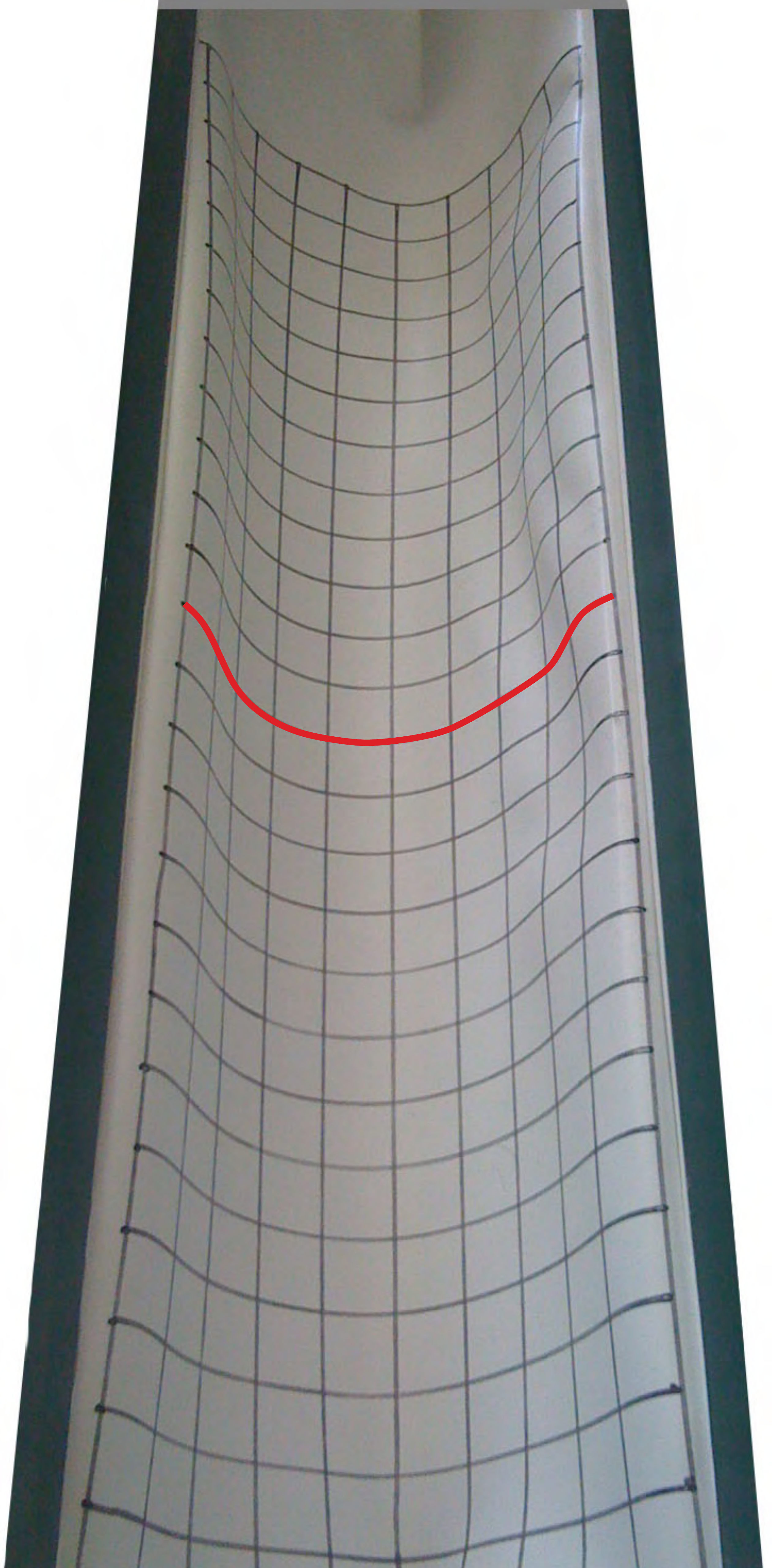}
\includegraphics[width=13 mm]{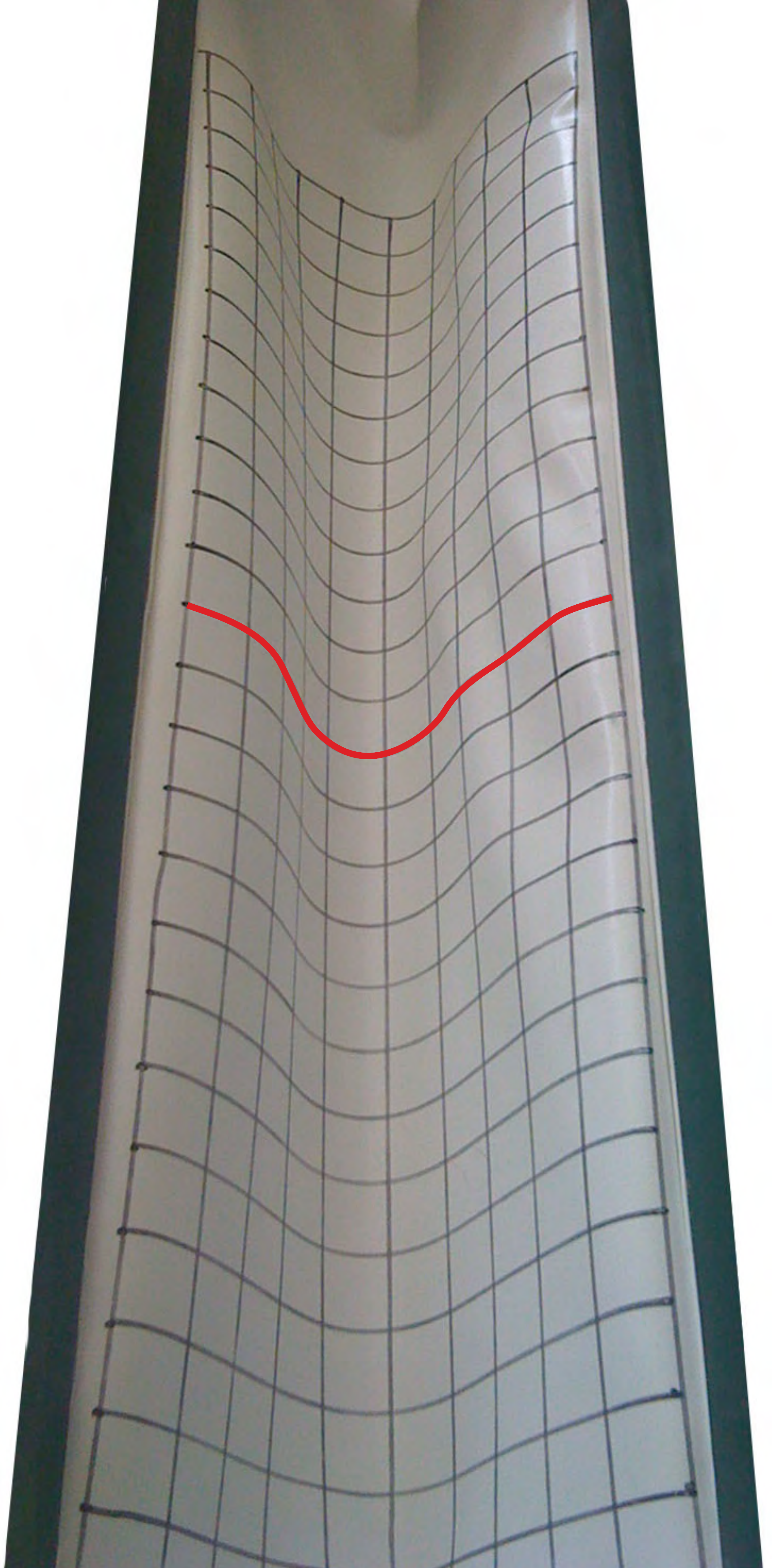}
\includegraphics[width=13 mm]{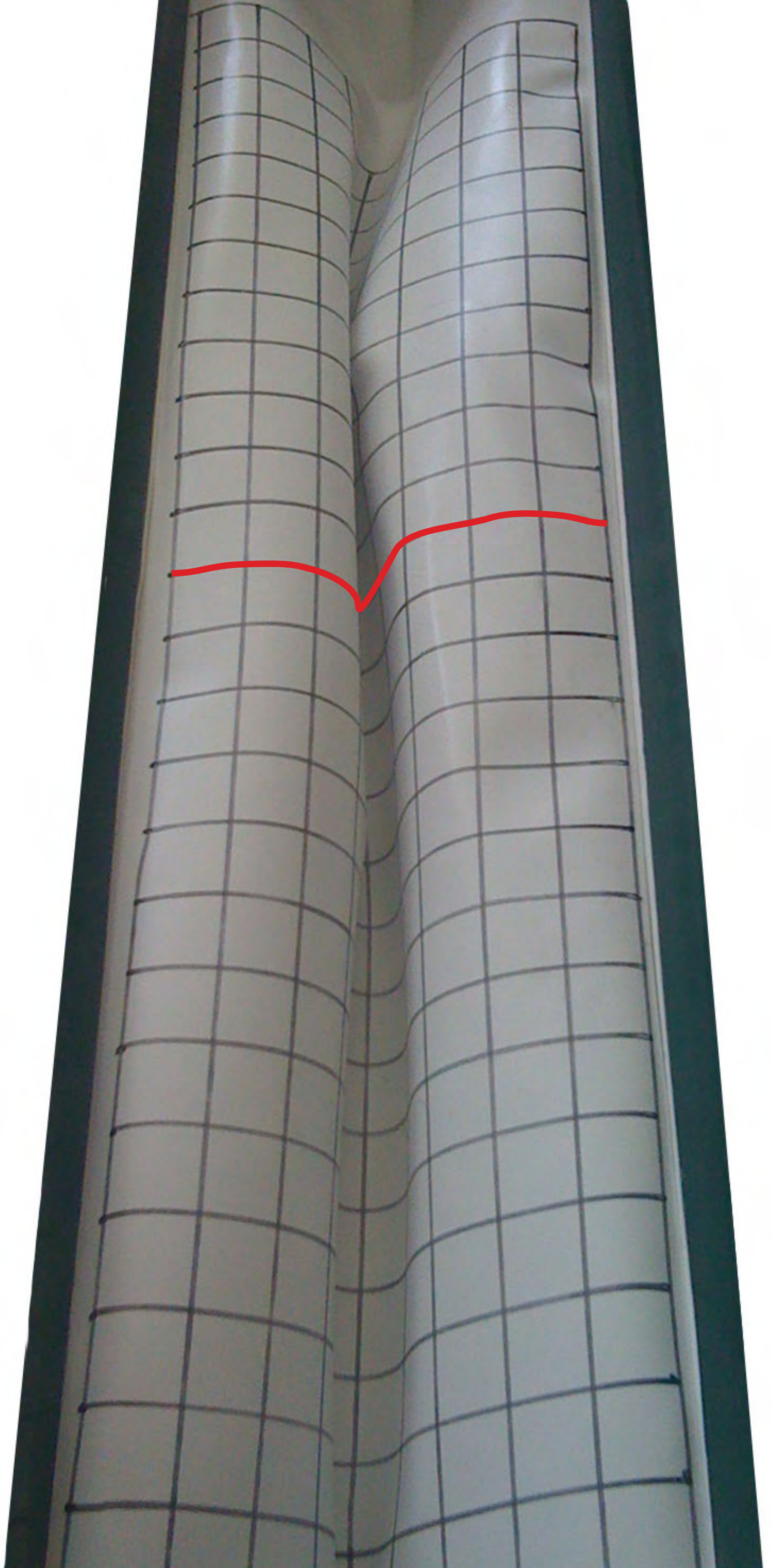}
\includegraphics[width=13 mm]{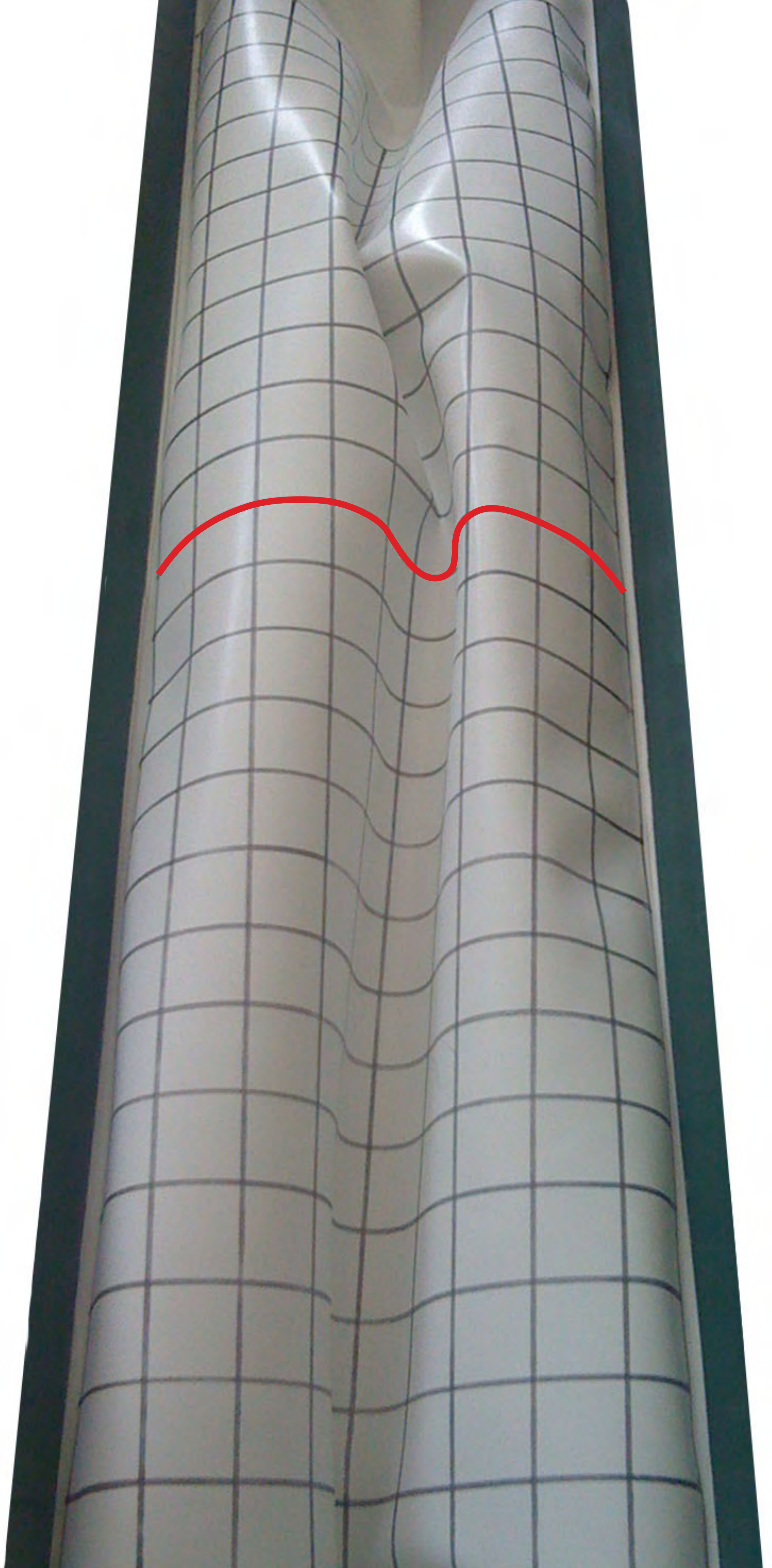}
\includegraphics[width=13 mm]{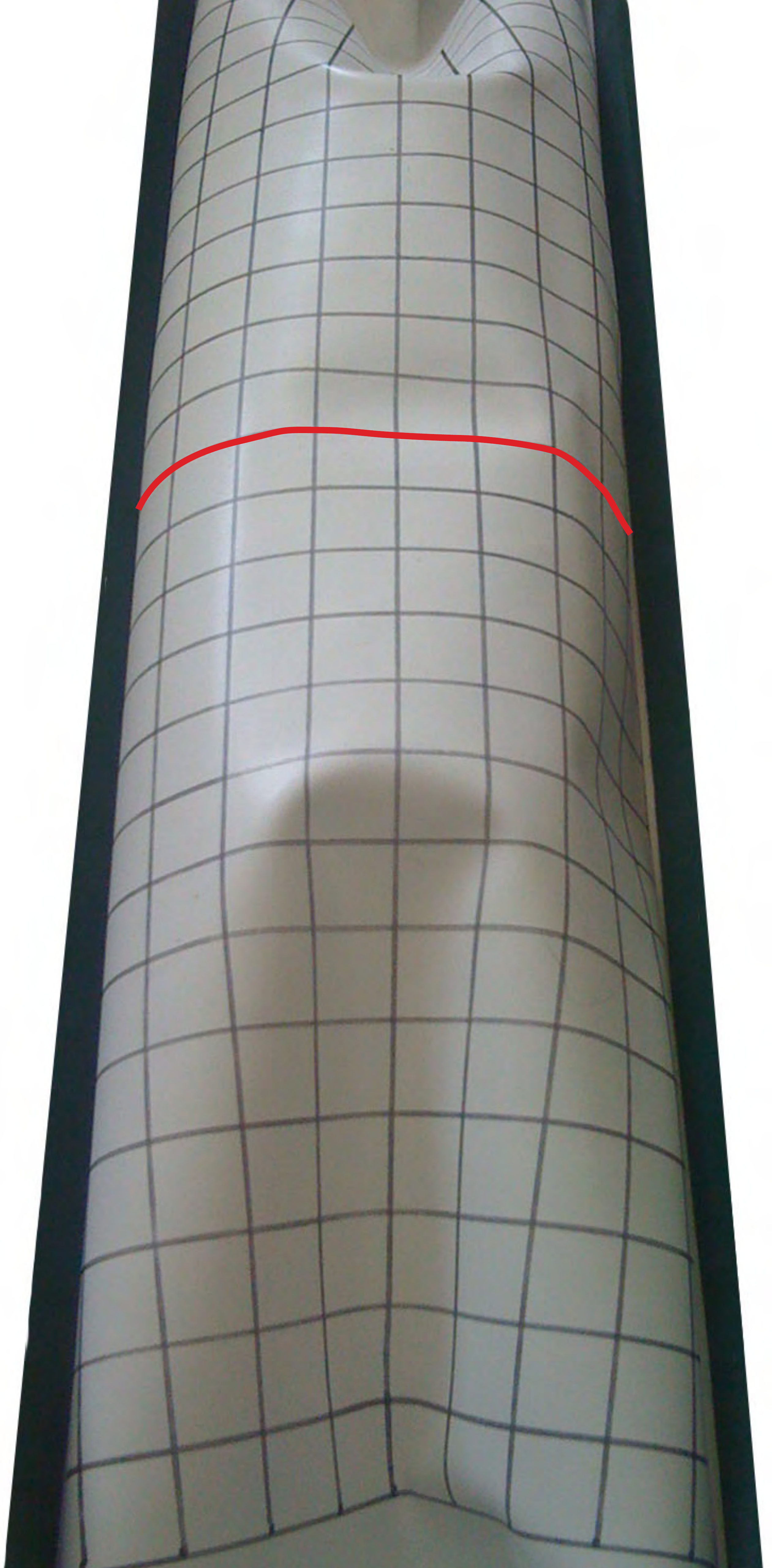}
\includegraphics[width=13 mm]{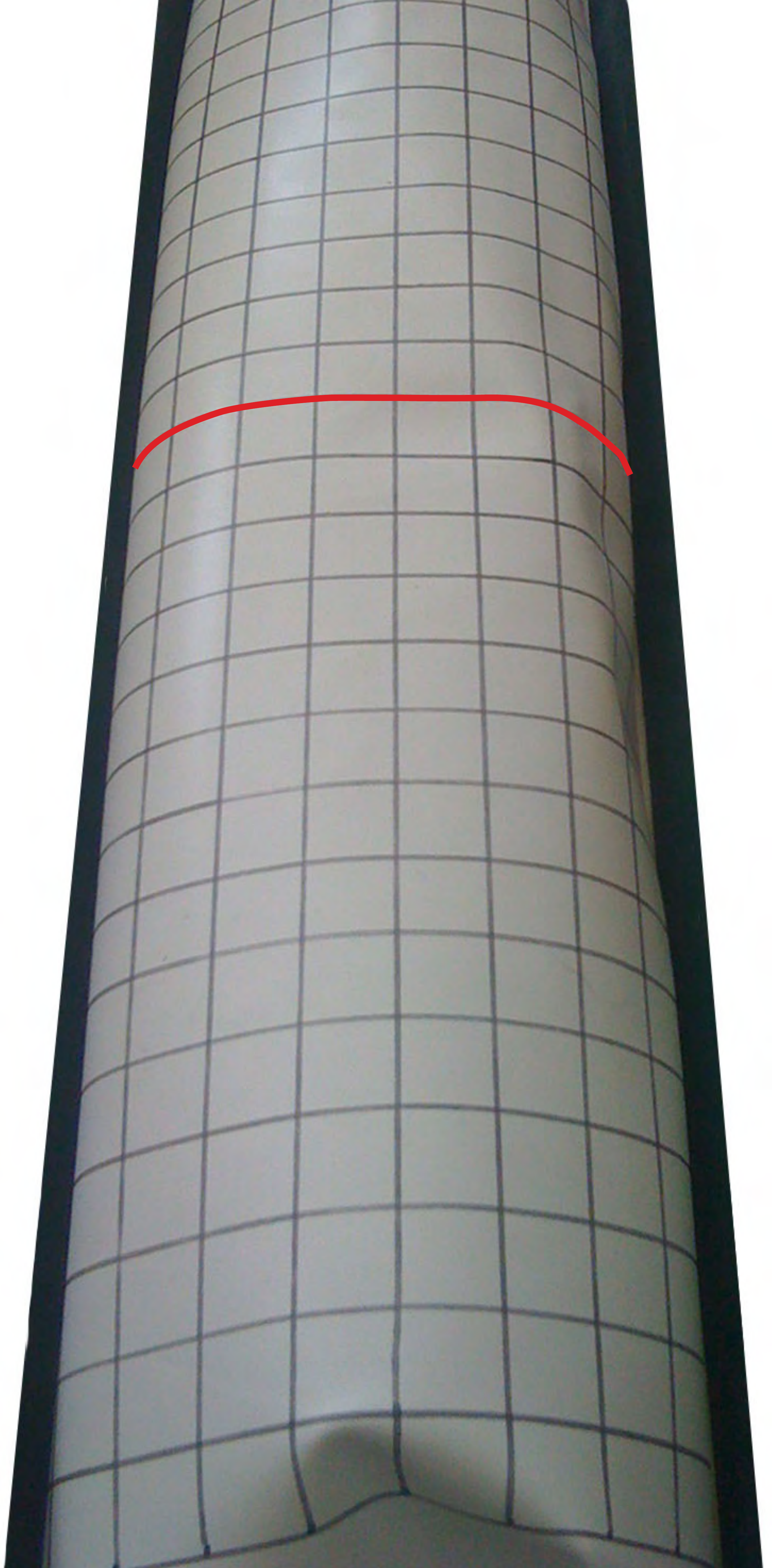}
\includegraphics[width=13 mm]{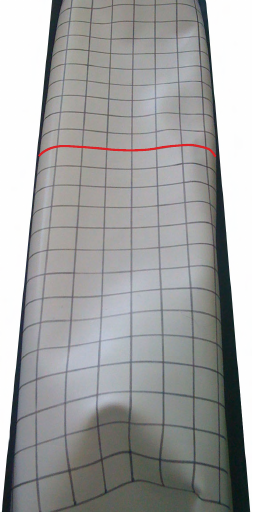}
\end{center}
\caption{A table-top experiments on the combination of hydrostatic pressure and gravity. The pressure is increasing from left to right from zero for the first six images, which correspond to the states shown in the left column (i.e increasing $h/g$ ratio) of Figure~\ref{numfig}. The last image shows the near-critical geometry as the load ratio is decreased from its maximum value.}\label{exptfig}
\end{figure}

\section{Conclusions and outlook}\label{s:conclusions}

\subsection{Summary of results}
We presented a theory about the global geometry of flexible strings when subjected to two
opposing smooth loads and we showed that under these loads the unilateral constraint in the
material that it does not support compressive loads may lead to non-smooth, spatially complex global geometry.
Under the assumptions that load intensities depend only on the slope of the string we showed
that the  key  to  the geometric collapse are the roots of the  critical equation (\ref{critical})
and we proved that the existence of \em relevant roots \rm (defined in  equations (\ref{bracket}), (\ref{delta}))
is the necessary and sufficient condition
for the nonexistence of smooth equilibrium configurations. We investigated four classical loads (gravity, bridge load,
Newtonian wind, hydrostatic pressure) and showed that if we couple any of the latter three with the first then
relevant roots appear and the geometry becomes non-smooth. While the behaviour under any one of these loads has been known
for centuries, this appears to be an interesting addition to the classical theory of strings.

Beyond proving the nonexistence of smooth solutions we also described the non-smooth regime.
In all three investigated cases the roots of (\ref{critical}) underwent a saddle-node
bifurcation as the load ratio was varied. One end of the critical regime with no smooth solution
is the bifurcation point, the other end is determined by the boundary conditions (length of string
and horizontal span). We determined the smooth limiting geometries at both (lower and upper) ends of the critical regime.
In case of bridge and wind load we found that the lower limiting geometry is a flat, rectangular shape
while the upper limiting geometry is an upward pointed, symmetrical wedge. In case of hydrostatic pressure
the lower limiting  geometry is a downward pointing wedge, while the upper limiting geometry can not be determined
as self-intersection occurs before the end of the critical regime can be reached. Nevertheless, the middle
portion of the string approaches the horizontal tangent.

By using numerical simulations we verified not only the above findings but also determined the non-smooth geometries
and the dynamical behaviour
inside the critical range of the load ratios. In case of hydrostatic pressure we performed a simple
table-top experiment which showed fair qualitative agreement both with theory and the numerical
simulations.

\subsection{Open questions and potential generalisations to membranes}
While our results apply to the geometry of membranes only under very special
conditions, the analogy between the models is apparent.
The equilibrium equations of thin membranes show a close analogy to (\ref{aa1}-\ref{aa2}) if no
in-plane shear is admitted:
\bea \label{mm1}
\bar \kappa &  =  & \frac{f_n}{T} \\
\label{mm2}
 T_i & =  & f_{t,i} \qquad i=1,2
\eea
where $\bar \kappa= \kappa_1 + \kappa_2$ is the sum of
the principal curvatures, $T_i$ refers to the partial derivatives of the tension $T$
and $f_{t,i}$ denote tangential loads \cite{Alexandrov},\cite{Rogers}. The main difference between the 2D problem (strings) and the 3D problem
(membranes) is that while strings are intrinsically flat (i.e. they do not have intrinsic curvature), the intrinsic
geometry of membranes is nontrivial and can be described by the Gaussian curvature.

In case of intrinsically flat membranes and simple boundary conditions (e.g. a rectangular membrane
supported along two parallel edges) our results directly apply and this was also confirmed
by our table-top experiments. In case of intrinsically curved membranes, equations (\ref{mm1})-(\ref{mm2}) have to be supplemented by the Gauss-Mainardi-Codazzi equations \cite{Alexandrov}. While the integrability of this coupled system 
has been investigated in \cite{Rogers} and is beyond the scope
of the current paper, we believe that qualitative features of our results still apply. In particular, we expect that
in case of intrinsically curved membranes the geometry of collapse will also include non-smooth shapes.

Our assumption on the slope-dependence of loads in (\ref{loads}) is not necessarily true in general, in particular,
strings with non-uniform mass distribution will have gravity loading which depends explicitly on the arc length.
We expect our results to be structurally stable and this
is supported both by the numerical simulation and by the table-top experiments. If assumption (\ref{loads}) is
not valid then one can not solve (\ref{aa1}-\ref{aa2}) by the simple Ansatz (\ref{ansatz}). However, we expect that for
loads slightly varying with arc length and, possibly with $x$ and $y$,
instead of critical solutions with \em constant \rm slope one will find critical solution
with \rm bounded \rm slopes.

\subsection{Applications}

Beyond giving insight into the global geometry of strings, our result can  also be used to predict
the approaching collapse of inflated tents in which  hydrostatic pressure is used to act against
gravity. In this case our model indicates that before collapse the middle of the roof will become
horizontal and  the curvature will decrease. The recent collapse of the inflated  Metrodome in Minneapolis (\cite{CNN}) shows that this appears
to be the case indeed. Here the collapse was caused by  snow accumulating on the tent.

It might be of interest to note that if tents were pressurised by a mechanism producing the
Newtonian wind-type load, then as the weight increased (or the pressure decreased)  the slope
towards the middle of the tent would \em increase \rm and this would prevent an incremental
collapse due to the accumulation of snow.

\section{Acknowledgements}
This work was supported by OTKA grant 104601 as well as the grant TAMOP - 4.2.2.B-10/1-2010-0009.

\small

\end{document}